\def\@settitle{\begin{center}%
		\baselineskip14\p@\relax
		\normalfont\LARGE\scshape\bfseries
		\@title
	\end{center}%
}
\def\subsection{\@startsection{subsection}{2}%
	\z@{.5\linespacing\@plus.7\linespacing}{.5\linespacing}%
	{\normalfont\bfseries}}
\def\subsubsection{\@startsection{subsubsection}{3}%
	\z@{.5\linespacing\@plus.7\linespacing}{.5\linespacing}%
	{\normalfont\itshape}}
\definecolor{darkblue}{rgb}{0.0, 0.0, 0.45}
\date{\today}
\newcommand{\first}{t_{1,n,k}}
\newcommand{\Second}{t_{2,n,k}}
\newcommand{\seconds}{t_{2,n,k}^s}
\newcommand{\PI}{\pi_{n,k}^s}
\newcommand{\Xn}{x_{n}(t)}
\newcommand{\Xni}{x_{n}^{i}(t)}
\newcommand{\Yn}{y_{n}(t)}
\newcommand{\X}{x^{i}_{n}(t)}
\newcommand{\XT}{x^{i}_{n}(t)}
\newcommand{\YT}{y^{i}_{n}(t)}
\newcommand{\SUMN}{\sum\limits_{n=1}^{N}}
\newcommand{\TK}{\mathcal{T}_{k,n}}
\newcommand{\E}{\mathbb{E}}
\newcommand{\NN}{n\in\mathcal{N}}
\DeclareMathOperator{\sign}{sign}
		\theoremstyle{plain}
		\newtheorem{theorem}{Theorem}
		\theoremstyle{plain}
		\newtheorem{assumption}{Assumption}
	\theoremstyle{plain}
	\newtheorem{lemma}{Lemma}
		\theoremstyle{plain}
		\newtheorem{definition}{Definition}
		\theoremstyle{plain}
		\theoremstyle{plain}
		\newtheorem{remark}{Remark}
		\theoremstyle{plain}
\title[]{Multi-agent maintenance scheduling based on the coordination between central operator and decentralized producers in an electricity market}
\author[1]{Pegah Rokhforoz$^1$}
\author[2]{Blazhe Gjorgiev$^2$}
\author[2]{Giovanni Sansavini$^2$}
\author[3]{Olga Fink$^{3}$}
\address[1]{Chair of Intelligent Maintenance Systems, ETH Zurich, Switzerland, and School of Electrical and Computer Engineering, University of Tehran, Tehran, Iran.}
\address[2]{Reliability and Risk Engineering Laboratory, Institute of Energy Technology, Department of Mechanical and Process Engineering, ETH Zurich, Switzerland.}
\address[3]{Chair of Intelligent Maintenance Systems, ETH Zurich, Switzerland. Corresponding author, email address:ofink@ethz.ch.}
\begin{document}

\maketitle

\begin{abstract}
Condition-based and predictive maintenance enable early detection of critical system conditions and thereby enable decision makers to forestall faults and mitigate them. However, decision makers also need to take the operational and production needs into consideration for optimal decision-making when scheduling maintenance activities. Particularly in network systems, such as power grids, decisions on the maintenance of single assets can affect the entire network and are, therefore, more complex. 

This paper proposes a bi-level multi-agent decision support system for the generation maintenance decision (GMS) of power grids in an electricity market in the context of predictive maintenance. The GMS plays an important role in increasing the reliability at the network level. The aim of the GMS is to minimize the generation cost while maximizing the system reliability. The proposed framework integrates a central coordination system, i.e. the transmission system operator (TSO), and distributed agents representing power generation units that act to maximize their profit and decide about the optimal maintenance time slots while ensuring the energy balance. We derive the optimal strategy of the agents that are subject to predictive maintenance via a distributed algorithm, through which agents make optimal maintenance decisions and communicate them to the central coordinator, i.e. the TSO. The TSO decides whether to accept the agents' decisions by considering market reliability aspects and power supply constraints. To solve this coordination problem, we propose a negotiation algorithm using an incentive signal to coordinate the agents' and the central system's decisions, such that all the agents' decisions can be accepted by the central system. We demonstrate the effectiveness of the proposed algorithm with reference to the IEEE 39 bus system.
\end{abstract}

\noindent \textit{Indices}

\begin{description}
\item [$t$] \quad Index of hour.

\item [$n$] \quad  Index of generation units (agents).

\item [$s$] \quad Index of scenario path.

\item [$j$] \quad Index of nodes.

\end{description}

\noindent \textit{Constants}

\begin{description}

\item [$T_{n}$]  Fixed planning time for maintenance of agent $n$.

\item [$N$]  Number of generation units (agents).

\item [$S_{n}$] Number of path scenarios for the second threshold of agent $n$.

\item [$J$] Number of nodes.

\item [$\Gamma_{j}$] Set of nodes directly connected to node $j$.

\item [$P(t)$] Electrical market price at time \textit{t}$[\frac{\$}{MW}]$.

\item [$q_{n}^{\max}$] \quad Maximum power output of unit $n$ $[MW]$.

\item [$q_{n}^{\min}$] \quad Minimum power output of unit $n$ $[MW]$.

\item [$C_{n}(t)$] Generation cost of unit $n$ at time $t$ $[\$]$.

\item [$L_{j}(t)$] \quad   Load of system at node $j$ at time $t$ $[MW]$.

\item [$B_{j,r}$] Susceptance of line $jr$ [per unit].

\item [$F_{j,r}$] Transmission capacity of line $jr$ $[MW]$.

\item [$\first$]  First threshold of unit $n$.

\item [$\Second$]  Second threshold of unit $n$.

\item [$r_{n,k}$] Repair time of unit $n$.

\item [$\seconds$] Second threshold scenario path $s$ of unit $n$.

\item [$\PI$] Probability of second threshold scenario path $s$ of unit $n$.

\item [$\alpha_{n}$] Positive constant of unit \textit{n} [$\$/h$].

\item [$\gamma_{n}$] Positive constant of unit \textit{n} [$\$$].
\end{description}

\noindent \textit{Decision variables} 

\begin{description}

\item [$x_{n}(t)$] \quad Maintenance decision of unit $n$ at time $t$.

\item [$y_{n}(t)$] \quad Maintenance decision of central system for unit $n$ at time $t$.

\item [$q_{n}(t)$] Power output of unit $n$ at time $t$ $[MW]$.

\item [$\theta_{j}(t)$] Angle of node $j$ at time $t$ $[rad]$.

\item [$u_{n}(t)$] Commitment decision of Agent $n$ at time $t$.

\end{description}
{\textit{Notations.}}  The function $\sign(\cdot)$ represents the standard sign function as 

$\sign(x)=
    \begin{cases}
      1, & x>0 \\
      0,&   x=0\\
      -1,& x<0.
    \end{cases}$ 
    
    Let us consider $\zeta$ as a stochastic variable, we denote by $\E_{\zeta}{\{f(\cdot,\zeta)}\}$ the expected value of function $f(\cdot,\zeta)$ with respect to $\zeta$. We define the column augmentation of $x_{n}(t)$ for $t={\{1,\cdots,T_{n}}\}$  as $\mathbf{x_{n}}:=\textbf{col}(x_{n}(1),\cdots,x_{n}(T_{n})):=[x_{n}(1),\cdots,x_{n}(T_{n})]$.

\section{Introduction}

Traditionally, maintenance actions in power systems have been scheduled in a preventive way, typically with long planning decision horizons. Maintenance intervals have to strike a balance between system availability, the impact of failures and maintenance costs \cite{Kralj_1988}. To this aim, condition-based and predictive maintenance enable to perform maintenance activities only when required, based on the condition of the system and the predicted evolution of its degradation \cite{Bangalore_2016,Dieulle_2003,Yang_2008}. However, this decreases the planning decision horizons significantly due to the short reaction times of these algorithms and entails a complex maintenance scheduling because of the strong coordination required by the different decision makers. Indeed, the decisions of individual stakeholders affect the operations of the whole power network. 

The goal of maintenance scheduling is to maximize the reliability of the system. One major challenge associated with maintenance scheduling in network systems, like power grids, is that the decision makers need to comply with system constraints. The main constraint of the power grid in the electrical market is the instantaneous balance of power. Making a trade-off between maximizing the reliability and balancing the power demand plays a significant role in the  maintenance scheduling of generating units. Several research studies have addressed the trade-off between maximizing reliability and load balancing using an optimization approach, agent-based modelling and incentive signals solutions \cite{Shahidehpour_2012,Moslehi_2010}.

In this paper, we propose a bi-level negotiation algorithm in the context of predictive maintenance between the central coordinating system and decentralized power generating agents using incentive signals. Our proposed framework is specifically targeting predictive maintenance decisions, which enable agents to perform maintenance before the failure occurs based on the predicted remaining useful life and its uncertainty. Hence, the reliability of the system is preserved and corrective maintenance costs are avoided. However, the agents have only a limited time window to make their decisions. Within the framework of predictive maintenance, we assume that the remaining useful lifetime (RUL) of a generating unit can be predicted and the associated uncertainty estimated. We develop a model for the agents' objective functions, which is based on the expected deterioration cost and on the revenue that the agents obtain from the energy production. In addition, to ensure power balance, we propose a negotiation algorithm where agents submit their optimal maintenance time slot to the central system. Agents' decisions are rejected by the central system if the load fulfillment cannot be satisfied during the time period of the maintenance. We develop a maintenance optimization model for the central system which strikes a balance between the proximity to the end of life prediction of the individual plants and the need to satisfy the load balance at the system level. If the agents' decisions cannot be accepted, the central system sends an incentive signal to encourage the agents to adjust their proposition of the maintenance time slot before failure. The procedure iterates until the convergence occurs, i.e. the adjusted decisions of all agents can be accepted by the central system. The proposed mechanism provides a weak budget balance during the negotiation process, i.e. during negotiations, since the agents pay a penalty to the central system, its revenue is positive. This indicates that the mechanism of incentive signals by the central system does not add overheads to the market mechanism. Indeed, this is the main feature that a mechanism design solution (negotiation algorithm) should possess \cite{Kosenok_2008}. Moreover, in the convergence point of the algorithm, the proposed algorithm is fully budget balanced which means that the revenue of the central system is zero. Furthermore, our simulation results show that the expected rewards of the agents are positive during the convergence process. Hence, the proposed induced mechanism is individually rational for each of the agents. This means that the agent will participate voluntarily in the mechanism under the assumption of rationality.


In summary, the proposed framework comprises three main contributions, which go beyond the state of the art, namely: 

\begin{enumerate}
\item  We propose a decentralized framework using a bi-level negotiation algorithm that is suitable for predictive maintenance scheduling, where agents perform maintenance before they fail.

\item The proposed induced mechanism is weakly budget balanced during the negotiation and is fully budget balanced in the convergence point of the algorithm. In other words, at the convergence point of the algorithm, the central system does not need to pay the money to the agents to encourage them to change their decisions.

\item The proposed approach is individually rational and agents participate in the mechanism voluntary.
\end{enumerate}

To the best of our knowledge, this is the first time that a multi-agent  maintenance scheduling framework based on the coordination between central operator and decentralized producers in an electricity market is proposed in the context of predictive maintenance using the negotiation algorithm approach which is budget balanced and individually rational.

The remainder of the paper is organized as follows. A review of the related work is given in Section \ref{sec:Related}. The basics on remaining useful life (RUL) prediction are introduced in Section \ref{sec:RUL}. We present the framework of the negotiation algorithm in Section \ref{sec:framework}. The problem formulation which consists of the agents' and central system's objective function is introduced in \ref{sec:problem}. Section \ref{sec:coordinated} represents the coordination procedure among agents and central system using an incentive signal. The case study is introduced in Section \ref{sec:case}. The simulation results are shown in Section \ref{sec:simulation}. The concluding remarks are made in Section \ref{sec:conclusion}.

\section{Related work}
\label{sec:Related}
\textbf{Centralized generation maintenance scheduling.} The optimization approach for the maintenance schedule of power generating units in an electrical market system has been implemented in different ways. A method for optimal maintenance scheduling based on genetic algorithms has been proposed in \cite{Volkanovski_2008,Samuel_2015}. A meta-heuristic approach for the optimal maintenance scheduling of generation units using reliability-based objective function is presented in \cite{Dahal_2007}. Furthermore, a method for joint optimization of generation scheduling and preventive maintenance has been elaborated in \cite{Xiao_2016,Sadeghian_2019}. The goal of the optimization model proposed in \cite{Xiao_2016} is to minimize the total cost including production cost, preventive maintenance cost, minimal repair cost for unexpected failures and tardiness cost. The authors of \cite{Sadeghian_2019} propose a multi-objective optimization model to maximize the profits of selling electricity and maintaining the system reserves. The generation scheduling from the point of view of the central system is formulated in \cite{Ghazvini_2012}, which can be solved using linear mixed integer programming approach. All of these research studies achieve an optimal solution of the maintenance optimization in a centralized manner. In other words, the generating units do not decide on their maintenance independently. The authors of \cite{Abiri_2012} address the predictive maintenance scheduling problem using a centralized optimization approaches which works based on the decision tree and mixed integer linear programming. Their maintenance schedules are solely determined by the central system. These centralized approaches may result in high computational costs and the central system needs to know the private information of all agents.

\textbf{Decentralized generation maintenance scheduling.}
In addition to solving the maintenance scheduling as a central optimization problem, several research studies formulate the maintenance scheduling problem in a decentralized way, where agents decide on their maintenance scheduling based on individual objective functions. In \cite {Ren_2019}, an agent-based approach is proposed to maximize the system reliability while considering the load balance as a constraint. A bi-level optimization approach is developed in \cite{Li_2005} where agents maximize their rewards in the first level and the central system clears the market in the second level. In \cite{Kuznetsova_2014},  the authors propose a decentralized robust optimization approach which leads to an increase of system performance and reliability. A stochastic maintenance scheduling is proposed in \cite{Ghazvini_2013}, where agents choose their maintenance decisions and the TSO makes a final decision by considering the system's reliability and its constraints. All of these research studies do not consider any negotiation or coordination mechanism between the central system and the agents.

\textbf{Generation maintenance scheduling based on incentives or penalty signals.}
Several research studies have also addressed the generation maintenance scheduling based on incentives or penalty signals. The authors of \cite{Conejo_2005} propose an incentive signal which ensures an appropriate level of reliability. In this paper, the central system obtains the maintenance scheduling using an optimization approach to maximize the reliability and minimize a cost function. Moreover, each agent seeks to maximize its objective function and minimizes its maintenance cost that conflicts with the goal of the central system. This challenge is solved using a coordination mechanism based on an incentive signal. The maintenance scheduling of generating units based on game theory is proposed in \cite{Min_2013}. The authors consider a competition among agents using game theory and develop a coordination mechanism using an incentive signal to align the objective function of the agents with that of the central system. In \cite{Feng_2009}, a novel mechanism that balances between the benefits of the agents and the system reliability is proposed. In this paper, the agents submit a set of maintenance bidding costs to the central system by considering unexpected events of unit failures. The central system schedules the maintenance which satisfies the system's energy demand. Then, the central system sets the final expenditure to make a coordination mechanism. In \cite{Wang_2016}, a further coordination mechanism to maintain the central system reliability while maximizing the benefits of agents is proposed. One of the limitations of the previously proposed mechanisms is that not all of the mechanisms are budget balanced. Furthermore, the proposed mechanism in these works are not in the context of predictive maintenance and they just consider the maintenance cost in their objective functions which is the main difference with the proposed mechanism where the agents seek to minimize its predictive maintenance cost.

\textbf{Generation maintenance scheduling in the context of predictive maintenance.}
 The focus of the previous research studies has been mostly on the combination of corrective and preventive maintenance. Predictive maintenance has only been considered in the maintenance scheduling problems in a limited way. There are few research studies that address the generation maintenance scheduling considering predictive maintenance modelling. The authors of \cite{Yildirim_2016, Yildirim1_2016} propose a mixed-integer optimization model for generation maintenance scheduling using the information of generators' health data. 
These researchers implement the centralized approach which is the main difference with our paper.

\section{Predicting the remaining useful life (RUL)}
\label{sec:RUL}
The prediction of the remaining useful life is at the core of predictive maintenance applications. Remaining useful life (RUL) is the amount of time, in terms of operating hours, cycles, or other measures in which the component will continue to meet its design specification \cite{lee2014prognostics}.  RUL estimation is an essential part of prognostics and health management (PHM) of industrial systems. Prognostics estimate the lifetime of the  specific component in its specific operating  environment and are not solely based on the average system behaviour.
 Several directions have been proposed for estimating the RUL that fall into the three main categories \cite{lee2014prognostics, fink2020data}:
(a) model-based or physics-based approaches, i.e. combining typically two parts: modelling the underlying physics of a component/subsystem and modelling physics of damage propagation mechanisms; (b) data-driven approaches that are based on condition monitoring data \cite{Si_2012, kraus2019forecasting}; (c) knowledge-based approaches that rely on domain expert judgements.

Two cases can be distinguished in RUL prediction: 1) the case that RUL prediction is only determined by a continuous gradual deterioration of the system; 2) the case that the condition monitoring system detects the onset of an incipient fault that causes an accelerated deterioration of the system that dominates the RUL. For the more challenging second case, the RUL prediction typically integrates two steps, the detection of the fault onset and a subsequent prediction of the RUL based on the predicted operating profile. 

Since the estimation of the RUL involves predicting the future behavior of industrial assets, it implies several sources of uncertainty that influence the future prediction. Therefore, it is
rarely feasible to estimate the RUL with complete certainty and RUL cannot be considered as deterministic \cite{sankararaman2013remaining}. 

The prediction of RUL, therefore, typically also involves the quantification of the uncertainty associated with the specific prediction \cite{sankararaman2013remaining, fink2020potential, sankararaman2013analytical}. Different approaches have been proposed to estimate the uncertainty of the RUL prediction and combine the different sources of uncertainty \cite{saxena2010evaluating}. Indeed, the estimation of the RUL does not prevent the failure of the component before the predicted end of life (EoL). In this paper, we assume that we can detect the onset of a fault and can subsequently predict the RUL with the associated uncertainty. We can, therefore, assume that the probability distribution of the estimated uncertainty is known.

\section{The proposed negotiation mechanism}
\label{sec:framework}
The proposed bi-level multi-agent system for the maintenance decision support aims at minimizing the generation cost while maximizing the system reliability in terms of generation adequacy. The framework integrates a central coordination system, the transmission system operator (TSO) and distributed agents representing power generation units that act to maximize their profit and decide about the optimal maintenance time slots while ensuring the power demand balance. 

The mechanism  consists of four main steps which enforce the reliability and the power demand balance: 
\begin{enumerate}
\item Each agent obtains its predictive maintenance modelling ($DC_{n}$, $\NN$, which is defined in the next section) by considering the fault onset time $t_{1}$ and failure time $t_{2}$. $t_{1}$ is obtained using the condition monitoring data and is certain, but $t_{2}$ is predicted based on the prognostic algorithm and is a random variable with a known distribution function.

  \item Each agent considers its failure time as a stochastic random variable and decides on the maintenance time slot for a fixed period of time ($x_{n}$, $\NN$), depending on the expected rewards that it obtains by generating the power and the expected cost of maintenance. Based on this, the agents submit their decisions (bids) to the central system.
 
  \item The central coordinating system is responsible for the reliability at the network level and the load balance. Hence, if accepting all of agent decisions does not guarantee the power balance, the central system selectively accepts the bids of the agents that are close to the failure time and rejects other bids which is expressed as ($y_{n}, \NN$) in Figure \ref{fig:negotiation}. 
  
  \item Since the agents will fail if no maintenance is performed within the specified time period and incur into failure costs, the central coordinating system encourages them to change their decisions by providing them with an incentive signal ($I_{n}, \NN$) which is explained in Section \ref{sec:coordinated}.

The information flow among agents and central system is shown in Figure  \ref{fig:negotiation}.

\begin{figure}[htb!]
\centering
{\onecolumn\includegraphics[scale=0.9]{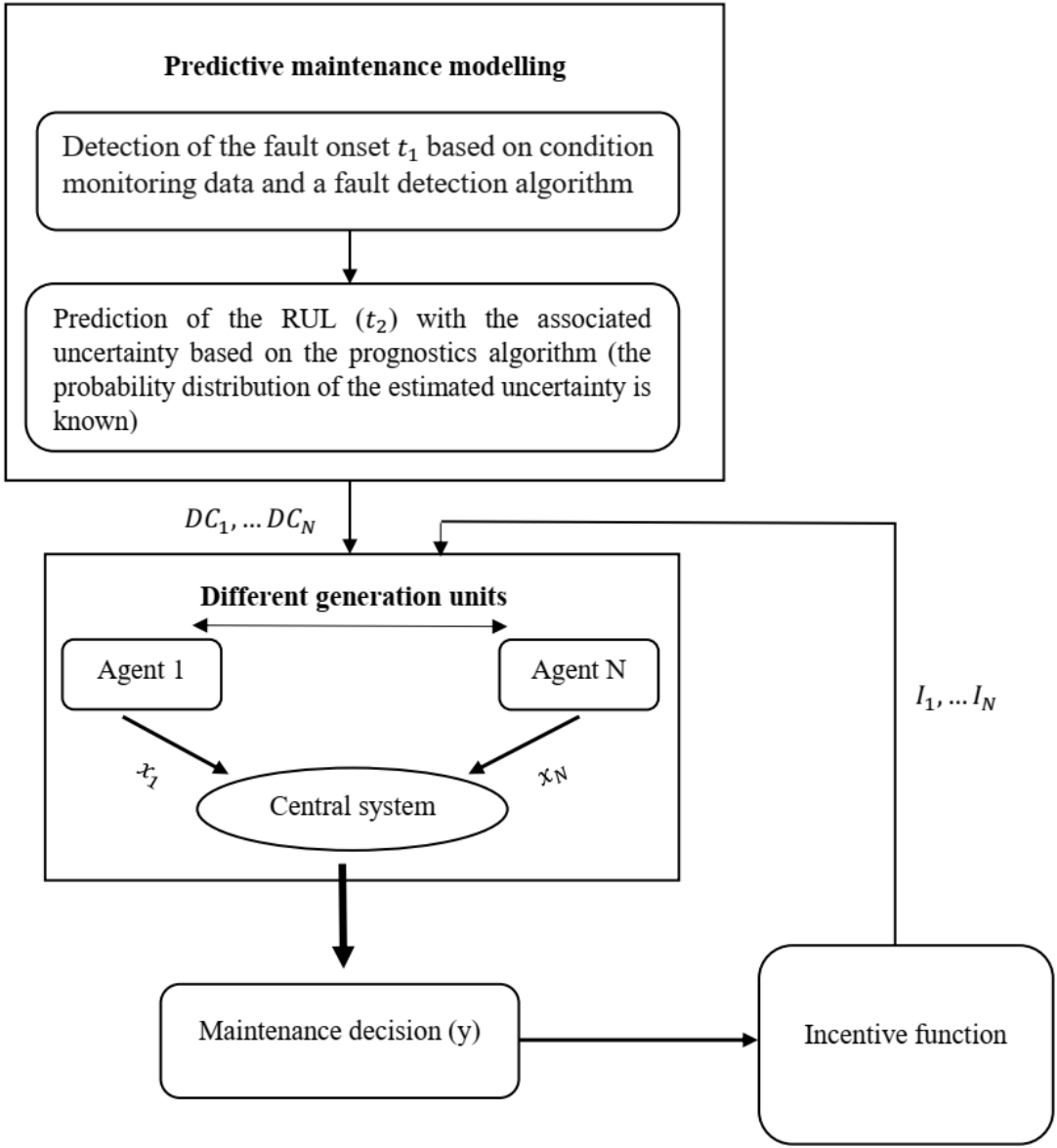}}
\caption{Negotiation schematic between power generating agents and the central coordinating system.}
  \label{fig:negotiation}
\end{figure}  

\end{enumerate}

\section{Problem formulation}
\label{sec:problem}

\subsection{Stochastic deterioration cost}
In this research study, we assume a setup where generation units are equipped with condition monitoring devices, which can detect the fault initiation. We assume that the failure behaviour of the components is not dominated by continuous gradual degradation but rather by faults that can be detected by the condition monitoring system. After the detection, the remaining useful lifetime is predicted based on the current system state, the predicted evolution of the detected faulty system condition and the future operating profile. Hence, the fault detection is considered deterministic since it happens in real time and
the prediction of the RUL is only initiated once the fault has
been detected. The maintenance action should be then performed within the detected fault initiation and the predicted failure time. To model the increasing deterioration and the increasing failure probability of the component after the detection of the fault, we introduce a fault progression penalty.

 In order to model the deterioration cost, we consider $\mathcal{T}_{n}={\{1,\cdots,T_{n}}\}$ as the maintenance scheduling time horizon of the agent $n$, and we define $t_{1,n}$ the fault initiation of agent $n$, and $t_{2,n}$, as the end of life of agent $n$ which is a random variable with a known distribution function and is equivalent to the failure time or the end of life of the generation unit $n$, $\NN$. Since the fault progression is typically nonlinear, we model the penalty costs between the fault detection and the end of life as an exponential function. $t_{1,n}$ is the starting time to impose a fault progression penalty on agent $n$, $\NN$. The fault progression cost can be modelled as follows:

\begin{equation}
\begin{aligned}
    DC_{n}=\sum\limits_{t\in\mathcal{T}_{n}}\E_{t_{2,n}}{\Big\{\big(1-\Xn\big)\alpha_{n}e^{(t-\first)}\big(\sign{(t-\first)}-{{\sign(t-\Second)}\big)\Big)}\Big\}}.\\
    \label{eq:agent's deterioration cost}
\end{aligned}
    \end{equation}
    
    Equation \ref{eq:agent's deterioration cost} implies that if agent $n$ does not perform a maintenance action after the fault initiation $t_{1,n}$, then it receives a penalty cost. Moreover, the unit will fail if the agent does not perform maintenance before the failure time. In this case, no deterioration cost is imposed on the agent after that point. In other words, the agent is penalized by the fault progression cost between $t_{1,n}$ and $t_{2,n}$, $\NN$, which is modeled by $\big(\sign{(t-\first)}-{\sign(t-\Second)}\big)$. Furthermore, since $t_{2,n}$, the failure time, is subject to uncertainty and is considered as a random variable, we use the expected fault progression cost for agent $n$, $\NN$. Since the agents seek to minimize their fault progression cost, the penalty function \eqref{eq:agent's deterioration cost} based on the value of $\alpha_{n}$ encourages agent $n$, $\NN$, to perform maintenance before the failure time. It is determined by the central system and is the fixed failure cost term independent of time.

\subsection{Agents' objective function}
The generating units determine the maintenance time slots by maximizing their expected reward functions. The agents are price-takers, i.e. they generate revenues by providing the capacity to meet the power demand. Each agent's objective function comprises two parts: 1) the sum of the expected gain from the generated power over the maintenance scheduling time horizon, which is computed by subtracting the sum of the generation costs from the revenues, and 2) the expected fault progression cost.

We model the expected reward function of agent $n$, $\NN$, as follows:

\begin{equation}
\begin{aligned}
    R_{n}=\sum\limits_{t\in\mathcal{T}_{n}}\E_{t_{2,n}}{\Big\{\big(1-\Xn\big)\Big(\frac{1}{2}\big(1-\sign(t-\Second)\big)\big(P({t})q_{n}({t})-C_{n}({t})\big)\Big)}\Big\}.\\
    \label{eq:agent's reward}
\end{aligned}
    \end{equation}

Equation \ref{eq:agent's reward} implies that if agent $n$, $\NN$, performs maintenance at time $t$ ($x_{n}(t)=1$) its revenue is zero, because  it cannot produce power. Furthermore, when agent $n$, $\NN$, does not perform maintenance before its failure time, it will fail in the sense that $\big(1-\sign(t-\Second)\big)=0$ and its generation and revenue become zero as well. It is important to note that the failure time is a random variable. Therefore, the sum of the expected rewards from the agent's generated power is considered.

For agent $n$, $n\in\mathcal{N}$, the decision-making process can be formulated as an optimization problem:
\begin{equation}
\begin{aligned}
    \max\limits_{\mathbf{x_{n}}}&\sum\limits_{t\in\mathcal{T}_{n}}\E_{t_{2,n}}{\Big\{\big(1-\Xn\big)}\Big(\frac{1}{2}\big(1-\sign(t-\Second)\big)\big(P({t})q_{n}({t})-C_{n}({t})\big)\\
    &-\alpha_{n}e^{(t-\first)}\big(\sign{(t-\first)}-{{\sign(t-\Second)}\big)\Big)}\Big\}\\
   \text{s.t.}&\quad\text{C}_{1}:\quad\Xn\in{\{0,1}\},\quad{t\in\mathcal{T}_{n}},\\
   &\quad\text{C}_{2}:\quad x_{n}(t+1)-\Xn\leq{x_{n}(t+r_{n}-1)},\quad{t\in\mathcal{T}_{n}},\\
    &\quad\text{C}_{3}:\quad 
    1\leq\sum\limits_{t\in\TK}x_{n}(t)\leq{r_{n}},
    \label{eq:agent's decision}
\end{aligned}
    \end{equation}
where $\mathbf{x_{n}}=\textbf{col}(x_{n}(1),\cdots,x_{n}(T))$. Constraints $\text{C}_{1}$ denotes that the maintenance decision is a binary variable and $\Xn=1$ indicates that agent $n$ decides to perform maintenance at time $t$. Constraints $\text{C}_{2}$ quantifies the amount of time needed for maintenance actions known as the repair time. Constraint $\text{C}_{3}$ enforces that agent $n$ performs maintenance in scheduling time horizon, which lasts $r_{n}$ time units at maximum.

Due to the stochastic nature of $\Second$, the optimization problem is also stochastic. To tackle it, we assume that $\Second$ has a finite number of possible realizations. Hence, we consider $\mathcal{S}_{n}={\{1,\cdots,S_{{n}}}\}$ scenarios for the second threshold $\Second$ and the probability of scenario $s$ is $\PI$, $s\in\mathcal{S}_{n}$, $\NN$. Hence, the optimization problem \eqref{eq:agent's decision} can be cast as a mixed integer linear programming problem:

\begin{equation}
\begin{aligned}
    \max\limits_{\mathbf{x_{n}}}&\sum\limits_{t\in\mathcal{T}_{n}}\sum\limits_{s\in\mathcal{S}_{n}}{\Big\{\big(1-\Xn\big)}\Big(\frac{\PI}{2}\big(1-\sign(t-\seconds)\big)\big(P({t})q_{n}({t})-C_{n}({t})\big)\\
    &-\alpha_{n}e^{(t-\first)}\big(\sign{(t-\first)}-{{\PI\sign(t-\seconds)}\big)\Big)}\Big\},\\
   \text{s.t.}&\quad\text{C}_{1},\quad\text{C}_{2},\quad\text{C}_{3}.
    \label{eq:agent's decision stochastic}
\end{aligned}
    \end{equation}

\subsection{Central system's objective function}
The central system  coordinator, i.e. the TSO, maximizes the system's reliability while fully supplying the power demand. The central system has a coordinating function. It cannot change the agents' decisions but only provide incentives to motivate them to change their decisions. This formulation corresponds to the setup in real applications where power generating agents are independent stakeholders and take their decisions independently of any central coordinating system. 

To this end, we propose a framework where the central system prioritizes the decisions of the agents whose plant is close to $\Second$, $\NN$, while ensuring the fulfillment of the energy demand by considering the network and agents constraints. Hence, we model the central system's objective as:

    \begin{equation}
\begin{aligned}
    \max\limits_{\mathbf{y},\mathbf{q},\mathbf{\theta},\mathbf{u}}&\quad\sum\limits_{n\in\mathcal{N}}\sum\limits_{t\in\mathcal{T}_{n}}\sum\limits_{s\in\mathcal{S}_{n}}\frac{\PI}{\seconds-t+\epsilon}\Yn,\\
    \text{s.t.}&\quad\text{A}_{1}:\quad\sum\limits_{n\in\mathcal{N}_{j}}(1-\Yn)q_{n}(t)-\sum\limits_{r\in\Gamma_{j}}B_{j,r}(\theta_{j}(t)-\theta_{r}(t))={L_{j}(t)},\quad{j\in\mathcal{J}},{t\in\mathcal{T}_{n}},\\
    &\quad\text{A}_{2}:\quad\Yn\in{\{0,1}\},\quad {t\in\mathcal{T}_{n}}, \NN,\\
    &\quad\text{A}_{3}:\quad \Yn\leq{\Xn},\quad {t\in\mathcal{T}_{n}},\NN\\
    &\quad\text{A}_{4}:\quad u_{n}(t)q_{n}^{\min}\leq{q}_{n}(t)\leq{u_{n}(t)q_{n}^{\max}},\quad {t\in\mathcal{T}_{n}}, \NN,\\
    &\quad\text{A}_{5}:\quad{u_{n}(t)}\in{\{0,1}\},\quad{t\in\mathcal{T}_{n}},\NN,\\
        &\quad\text{A}_{6}:\quad{-F_{j,r}}\leq{B}_{j,r}(\theta_{j}(t)-\theta_{r}(t))\leq{F_{j,r}},\quad r\in\Gamma_{j}, j\in\mathcal{J},{t\in\mathcal{T}_{n}},\\
        \label{eq:central system}
\end{aligned}
    \end{equation}
     where $\mathbf{y}=\textbf{col}(\mathbf{y_{1}},\cdots,\mathbf{y_{N}})$, and $\mathbf{y_{n}}=\textbf{col}(y_{n}(1),\cdots,y_{n}(T_{n}))$,  $\NN$.  $\mathbf{q}=\textbf{col}(\mathbf{q_{1}},\cdots,\mathbf{q_{N}})$, and $\mathbf{q_{n}}=\textbf{col}(q_{n}(1),\cdots,q_{n}(T_{n})), \NN$.
    $\mathbf{u}=\textbf{col}(\mathbf{u_{1}},\cdots,\mathbf{u_{N}})$, and $\mathbf{u_{n}}=\textbf{col}(u_{n}(1),\cdots,u_{n}(T_{n}))$, $\NN$. $\mathbf{\theta}=\textbf{col}(\mathbf{\theta_{1}},\cdots,\mathbf{\theta_{J}})$, and $\mathbf{\theta_{j}}=\textbf{col}(\theta_{j}(1),\cdots,\theta_{j}(T_{n}))$, $j\in\mathcal{J}$. 
    
    $\epsilon\geq{0}$ is a small constant value which avoids the infinity of the optimization and makes it computationally tractable.  
    Constraints $\text{A}_{1}$ ensures that power demand is satisfied at each bus (node) given the maintenance decisions of each unit. Constraints $\text{A}_{2}$ indicates that the central system's decision is a binary variable. Constraints $\text{A}_3$ enforces that the central system just accepts and rejects the agents' decisions and does not force them to perform maintenance. Constraints $\text{A}_{4}$ denotes the power generation limit of each agent. Constraints $\text{A}_{5}$ indicates that the unit commitment decision variable is binary, i.e. it is equal to one if the agent is scheduled to be committed in each period of time and is zero, otherwise. Constraints $\text{A}_{6}$ ensures that the flow in the transmission lines is within the capacity limits. Market clearing is performed using the maximum daily demand. Therefore, we can assume that the units will be able to ramp up/down to the required power level within the time frame of one day.
    
    In order to make $\Yn{q_{n}(t)}$ linear, we substitute it with a new variable:
    
    \begin{equation}
        Z_{n}(t)=(1-\Yn){q_{n}(t)},
        \label{eq:z}
    \end{equation}
    and add the following constraints as the equivalent of the nonlinear term
  
    \begin{equation}
    \begin{aligned}
        &{0}\leq Z_{n}(t)\leq(1-y_{n}(t))M\\
        &{Z}_{n}(t)\geq{q_{n}(t)-(1-(1-\Yn))M}\\
        &Z_{n}(t)\leq{q_{n}(t)+(1-(1-\Yn))M},
        \label{eq:big m}
        \end{aligned}
    \end{equation}
    where $M$ is a large positive constant \cite{Fortuny_1981}.
    
   Substituting \eqref{eq:z} and \eqref{eq:big m} into \eqref{eq:central system}, we obtain a mixed integer linear programming optimization:
    
    \begin{equation}
\begin{aligned}
    \max\limits_{\mathbf{y},\mathbf{q},\mathbf{\theta},\mathbf{u},\mathbf{Z}}&\quad\sum\limits_{n\in\mathcal{N}}\sum\limits_{t\in\mathcal{T}_{n}}\sum\limits_{s\in\mathcal{S}_{n}}\frac{\PI}{\seconds-t+\epsilon}\Yn,\\
    \text{s.t.}
    &\quad\text{A}_{2},\quad\text{A}_{3},\quad\text{A}_{4},\quad\text{A}_{5}, \quad\text{A}_{6},\\
    &\quad\text{A}_{7}:\quad\sum\limits_{n\in\mathcal{N}_{i}}Z_{n}(t)-\sum\limits_{r\in\Gamma_{j}}B_{j,r}(\theta_{j}(t)-\theta_{r}(t))={L_{j}(t)},\quad{j\in\mathcal{J}},{t\in\mathcal{T}_{n}},\\
        &\quad\text{A}_{8}:\quad{0}\leq Z_{n}(t)\leq(1-y_{n}(t))M,\quad {t\in{\{1,\cdots,T}\}},\NN,\\
         &\quad\text{A}_{9}:\quad{Z}_{n}(t)\geq{q_{n}(t)-(1-(1-\Yn))M},\quad {t\in\mathcal{T}_{n}},\NN,\\
         &\quad\text{A}_{10}:\quad Z_{n}(t)\leq{q_{n}(t)+(1-(1-\Yn))M},\quad {t\in\mathcal{T}_{n}},\NN,
        \label{eq:central system1}
\end{aligned}
    \end{equation}
where $\mathbf{Z}=\textbf{col}(\mathbf{Z_{1}},\cdots,\mathbf{Z_{N}})$ and $\mathbf{Z_{n}}=\textbf{col}(Z_{n}(1),\cdots,Z_{n}(T_{n}))$, $\NN$.

\section{Coordination procedure via incentive signal}
\label{sec:coordinated}

For the agent $n$, time slots for the maintenance action have to be identified, such that the TSO accepts the maintenance decision proposed by agent $n$. The TSO accepts the maintenance decision, if the overall system power demand is fully supplied and at the same time the agent can perform the required maintenance actions on the generating units. Since the objectives of the TSO and the generating units are partly conflicting, we solve the conflict by introducing a negotiation process between the agents and the central system using an incentive signal. 

In the first iteration $i=1$ of the negotiation algorithm, the agents set their decisions based on Equation \eqref{eq:agent's decision stochastic}, and submit them to the central system. The central system solves Equation \eqref{eq:central system1} for making a decision about whether to accept or reject the agents' maintenance decisions. If ${\Yn=1}$ and ${\Xn=1}$, agent $n$ can perform maintenance, otherwise the decision of agent $n$ is rejected and ${\Yn=0}$. In the case that agents' decisions cannot be accepted because the power demand cannot be satisfied, the central system sends an incentive signal to the agents which affects their objective function and motivates them to change their maintenance decisions in the next iteration of the negotiation algorithm. Without imposing an incentive signal, the agents whose decisions have not been accepted by the central system cannot perform maintenance before failure and would fail. Hence, they would not be able to produce power, which is also not in the interest of the central system. Therefore, an incentive signal is created such that the maintenance decisions of all agents will be accepted by the central system at some point (iteration).  In other words, the incentive signal is the contract between central system and agents, in the sense that if the agents' decisions cannot be accepted by the central system they have to pay a penalty to TSO, unless they change their decisions in the next iterations until their decisions can be accepted by TSO.

We define the incentive signal for agent $n$ at iteration $i$ and time $t$ as:

\begin{equation}
\begin{aligned}
    &I^{i}_{n}(t)=\gamma_{n}x^{i}_{n}(t)\sign\Big(\sum_{o=1}^{i-1}\big({y_{n}^{o}(t)-x_{n}^{o}(t)\big)\Big)},
           \label{eq:incentive}
\end{aligned}
\end{equation}
where $\X$ indicates the decision of agent \textit{n} at iteration \textit{i} and time $t$, and $y_{n}^{o}(t)$ indicates the central system decision at iteration ${o}$. The rationale of the incentive signal is detailed in Section \ref{subsec:rationale}.

Hence, during the negotiation process at iteration $i$, the objective function and the optimization problem for agent $n$ are expressed as:

\begin{equation}
\begin{aligned}
    &\max\limits_{\mathbf{x_{n}^{i}}}\sum\limits_{t\in\mathcal{T}_{n}}\sum\limits_{s\in\mathcal{S}_{n}}{\Big\{\big(1-x_{n}^{i}(t)\big)}\Big(\frac{\PI}{2}\big(1-\sign(t-\seconds)\big)\big(P({t})q_{n}({t})-C_{n}({t})\big)\\
    &-\alpha_{n}e^{(t-\first)}\big(\sign{(t-\first)}-{{\PI\sign(t-\seconds)}\big)\Big)}\Big\}+\sum\limits_{t\in\mathcal{T}_{n}}I^{i}_{n}(t),\\
  &\;\text{s.t.}\quad\text{C}^{\prime}_{1}:\quad{x}_{n}^{i}(t)\in{\{0,1}\},\quad{t\in\mathcal{T}_{n}},\\
   &\quad\;\;\quad\text{C}^{\prime}_{2}:\quad x_{n}^{i}(t+1)-x_{n}^{i}(t)\leq{x_{n}^{i}(t+r_{n}-1)},\quad{t\in\mathcal{T}_{n}},\\
    &\quad\;\;\quad\text{C}^{\prime}_{3}:\quad 
    1\leq\sum\limits_{t\in\TK}x_{n}^{i}(t)\leq{r_{n}},
       \label{eq:agent's decision1}
\end{aligned}
    \end{equation}
    where $\mathbf{x_{n}^{i}}=\textbf{col}(x_{n}^{i}(1),\cdots,x_{n}^{i}(T_{n}))$.

    The negotiation algorithm for the maintenance decision is described in Algorithm \ref{alg:iterative algorithm}.

\begin{algorithm}
	\caption{Negotiation algorithm for maintenance decisions} 
	\label{alg:iterative algorithm}
	\begin{algorithmic}[1]
		\STATE {\textbf{Input}}: $x_{n}^{0}(t)=0$,\quad $y_{n}^{0}(t)=0$, $\NN$, $t\in\mathcal{T}_{n}$.\\       
		\textbf{Iterate:}   \\                                             \label{alg.l.1}
		\STATE\text{For $\NN$} repeat until convergence:
		\STATE Obtain $x_{n}^{i}(t)$ using Equation \eqref{eq:agent's decision1}, $t\in\mathcal{T}_{n}$.
		\STATE Obtain $y_{n}^{i}(t)$ using Equation \eqref{eq:central system1}, $t\in\mathcal{T}_{n}$. 
		\STATE If $y_{n}^{i}(t)\neq{x_{n}^{i}(t)}$, $t\in\mathcal{T}_{n}$, calculate incentive signal using Equation \eqref{eq:incentive}. 
		\STATE  $i\leftarrow{i+1}$.
	\end{algorithmic}
\end{algorithm}

\subsection{Rationale of the proposed incentive signal}
\label{subsec:rationale}

If the maintenance decision of agent $n$ cannot be accepted by the central system at iteration $i$ and time $t$, the agent will receive $-\gamma_{n}\Xni$ as a penalty function in every iteration after $i$. Since the agent seeks to maximize its objective function, it will likely change its maintenance decision. Additionally, if the agents' decision is accepted by the central system at time $t$ in iteration $i$, then the incentive signal is zero at time $t$ in iteration $i+1$. Hence, if the agents obtain their maximum reward at iteration $i$ by choosing the maintenance decision at time $t$, the same decision is also made in iteration $i+1$. Namely, agents are not forced to make the same decision in subsequent iterations. However, the agents choose the same decisions through the optimization because they are optimal for them.

The developed coordination framework embraces the following assumptions.

\begin{assumption}
\label{eq:assumption1}
The repair time $r_{n},\NN$ is sufficiently small with respect to the maintenance scheduling time horizon. In the worst case where $\first$ and $\seconds, s\in\mathcal{S}_{n},$ are equal for all the agents:
\begin{equation}
  {\seconds}-\first\geq{N\max\limits_{\NN}} ({r_{n}}), \quad s\in\mathcal{S}_{n}, \NN.  
\end{equation}
\end{assumption}

\begin{assumption}
\label{eq:assumption2}
If one agent can perform maintenance, the power demand is satisfied by the remaining operating agents. In other words, if agent $m\neq{n}$, $m\in\mathcal{N}$, $\NN$, decides to perform maintenance at time $t\in\mathcal{T}_{n}$, we have 
\begin{equation}
\sum\limits_{\NN,n\neq{m}}q_{n}^{\max}\geq{L(t)}.
\end{equation}
\end{assumption}

Assumption \ref{eq:assumption1} and \ref{eq:assumption2} 
are not too conservative in real power systems. Since agents are heterogeneous, the worst-case that $\first$ and ${\seconds}$, $s\in\mathcal{S}_{n}, \NN$ are equal for all the agents is not realistic due to the individual variability of the operating conditions and differences in system configurations. Moreover, the failure rates are typically small due to the high requirements on system availability and safety. Therefore, it is reasonable to assume that the repair time is small relative to ${\seconds}-\first$. Furthermore, in most of the real electricity markets there is sufficient overcapacity. Hence, when just one agent performs maintenance the power demand can be satisfied.

The following lemma imposes a condition on $\gamma_{n}$ which is the constant for each agent and depends on the reward that the agent can obtain by its power generation. This condition guarantees that when the agents' decisions cannot be accepted by the central system, the agents will change their decisions at the next iteration.

\begin{lemma}
\label{eq:lemma}
Consider $x_{n}^{i}(t)$ and $y_{n}^{i}(t)$ as the maintenance decision of agent $n$ and the central system at time $t$ and iteration $i$ of the algorithm. In the case that $x_{n}^{i}(t)$ is not equivalent to $y_{n}^{i}(t)$,  $x_{n}^{i+1}(t)$ will not be equal to $x_{n}^{i}(t)$ if we have:
\begin{equation}
\begin{aligned}
\gamma_{n}&\geq{\max{\Big(0,\max_{t\in\mathcal{T}_{n}}\sum\limits_{s\in\mathcal{S}_{n}}{\Big\{}\Big(-\frac{\PI}{2}\big(1-\sign(t-\seconds)\big)\big(P({t})q_{n}({t})-C_{n}({t})\big)}}\\
    &+\alpha_{n}e^{(t-\first)}\big(\sign{(t-\first)}-{{\PI\sign(t-\seconds)}\big)\Big)}\Big\}\Big),\quad\NN.\\
\label{eq:gamma1}
\end{aligned}
\end{equation}
\end{lemma}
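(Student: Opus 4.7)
The plan is to argue by contradiction. Assume the hypothesis $x_{n}^{i}(t)\neq y_{n}^{i}(t)$ and suppose nevertheless $x_{n}^{i+1}(t)=x_{n}^{i}(t)$. First, constraint $\text{A}_{3}$ in the central system's program \eqref{eq:central system1} forces $y_{n}^{o}(t)\leq x_{n}^{o}(t)$ at every past iteration $o$, so the only way to have a disagreement at iteration $i$ is $x_{n}^{i}(t)=1$ and $y_{n}^{i}(t)=0$. Consequently, the cumulative sum $\sum_{o=1}^{i}(y_{n}^{o}(t)-x_{n}^{o}(t))$ picks up a $-1$ from iteration $i$ and no positive terms, so its sign is $-1$ and the incentive signal \eqref{eq:incentive} at iteration $i+1$ specialises to $I_{n}^{i+1}(t)=-\gamma_{n}\,x_{n}^{i+1}(t)$.

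Next I would rewrite the objective of \eqref{eq:agent's decision1} in the compact form $\sum_{\tau\in\mathcal{T}_{n}}(1-x_{n}^{i+1}(\tau))A_{n}(\tau)+\sum_{\tau}I_{n}^{i+1}(\tau)$, where the per-slot summand is $A_{n}(\tau):=\sum_{s}\bigl[\tfrac{\pi_{n,k}^{s}}{2}(1-\sign(\tau-\seconds))(P(\tau)q_{n}(\tau)-C_{n}(\tau))-\alpha_{n}e^{(\tau-\first)}(\sign(\tau-\first)-\pi_{n,k}^{s}\sign(\tau-\seconds))\bigr]$, so that the coefficient of $x_{n}^{i+1}(\tau)$ is $-A_{n}(\tau)+\gamma_{n}\sigma_{\tau}^{i+1}$ with $\sigma_{\tau}^{i+1}\in\{-1,0\}$ (it is never $+1$, again because of $\text{A}_{3}$). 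I then construct a feasible competitor $\tilde x_{n}^{i+1}$ that differs from $x_{n}^{i+1}$ only by shifting the maintenance block around $t$ to another admissible window inside $\mathcal{T}_{n}$ whose indices have never triggered a rejection (so their $\sigma$ is zero). Assumption \ref{eq:assumption1} ensures the horizon leaves enough room to place such a window while still respecting the block constraints $\text{C}^{\prime}_{2}$--$\text{C}^{\prime}_{3}$.

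The objective gap $\text{obj}(\tilde x_{n}^{i+1})-\text{obj}(x_{n}^{i+1})$ then telescopes, slot by slot, into a sum of contributions of the form $A_{n}(t)+\gamma_{n}-A_{n}(t')$ (one for each pair of swapped indices). The hypothesis \eqref{eq:gamma1} is precisely $\gamma_{n}\geq\max(0,\max_{\tau}(-A_{n}(\tau)))$, which yields $A_{n}(\tau)+\gamma_{n}\geq 0$ uniformly in $\tau$; combining this with the fact that $t$ was the optimum of the iteration-$i$ problem (so $-A_{n}(t)$ was among the largest coefficients of $x_{n}(\cdot)$ in that problem, giving $A_{n}(t)\leq A_{n}(t')$ on every feasible alternative slot) one concludes that each telescoped term is non-negative and strictly positive on the slot $t$ itself. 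Hence $\tilde x_{n}^{i+1}$ strictly beats $x_{n}^{i+1}$, contradicting the optimality of $x_{n}^{i+1}$ and forcing $x_{n}^{i+1}(t)=0\neq x_{n}^{i}(t)$.

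I expect the most delicate part to be the case $r_{n}>1$: a maintenance block of length up to $r_{n}$ covers several consecutive indices, some of which may already carry $\sigma=-1$ from earlier rejections at distinct times. The telescoping argument must be carried out carefully, verifying that the ``clean'' alternative block provided by Assumption \ref{eq:assumption1} exists and that the uniform bound $\gamma_{n}\geq -A_{n}(\tau)$ is applied slot-by-slot without double counting; this bookkeeping, together with a careful description of how the maintenance block is slid within $\mathcal{T}_{n}$ while obeying $\text{C}^{\prime}_{2}$ and $\text{C}^{\prime}_{3}$, is what I expect to absorb most of the technical effort.
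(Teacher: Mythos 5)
Your opening steps match the paper's: constraint $\text{A}_{3}$ forces $y_{n}^{o}(t)\leq x_{n}^{o}(t)$ at every iteration, so a disagreement at iteration $i$ means $x_{n}^{i}(t)=1$ and $y_{n}^{i}(t)=0$, the running sum inside \eqref{eq:incentive} is strictly negative, and $I_{n}^{i+1}(t)=-\gamma_{n}x_{n}^{i+1}(t)$. The paper essentially stops there: it compares, at the single slot $t$, the payoff $-\gamma_{n}$ of keeping $x_{n}^{i+1}(t)=1$ with the payoff $A_{n}(t)$ of setting it to $0$, and reads off \eqref{eq:gamma1} as the condition under which the latter dominates. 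Your exchange argument is a genuinely different and more careful route, because it acknowledges that constraint $\text{C}^{\prime}_{3}$ does not allow the maintenance block to be deleted, only relocated, so the true comparison must involve the alternative slot.

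Precisely at that point, however, your proof does not close. The telescoped gain of moving the block from $t$ to a clean slot $t'$ is $A_{n}(t)+\gamma_{n}-A_{n}(t')$, and you claim non-negativity by combining $A_{n}(\tau)+\gamma_{n}\geq 0$ with $A_{n}(t)\leq A_{n}(t')$. These two facts do not imply it: the second inequality points the wrong way (it makes the term $-A_{n}(t')$ more negative, not less), and the first only controls $-A_{n}(t')$ when $A_{n}(t')\leq 0$. If the alternative slot is profitable to operate, say $A_{n}(t)=-1$, $A_{n}(t')=100$, $\gamma_{n}=1$ (which satisfies \eqref{eq:gamma1}), the gain is $-100<0$ and a rational agent keeps its rejected slot. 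What your argument actually requires is a bound on the spread of the per-slot coefficients, $\gamma_{n}\geq \max_{t'}A_{n}(t')-\min_{t}A_{n}(t)$, which is not implied by \eqref{eq:gamma1}. The paper's own proof avoids facing this only by never accounting for the opportunity cost $A_{n}(t')$ of the relocated block, i.e.\ by implicitly treating $x_{n}^{i+1}(t)$ as a free binary variable; so the gap your more rigorous bookkeeping exposes is one the published two-line argument does not fill either, and it cannot be repaired without strengthening the hypothesis on $\gamma_{n}$ (or restricting to the case $A_{n}(t')\leq 0$ on all admissible alternative slots).
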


\begin{proof}
If $x_{n}^{i}(t)$ is not equal to $y_{n}^{i}(t)$, it means that $x_{n}^{i}(t)$ equals to one, hence the agent $n$ gets $-\gamma_{n}$ as the penalty function. We can conclude that a rational agent $n$ will not choose $x_{n}^{i+1}(t)=1$ if it gets a lower reward than $x_{n}^{i+1}(t)=0$ (a different decision would not be rational). Hence, this results in:
\begin{equation}
\begin{aligned}
-\gamma_{n}\leq{\max_{t\in\mathcal{T}_{n}}\sum\limits_{s\in\mathcal{S}_{n}}{\Big\{}\Big(\frac{\PI}{2}\big(1-\sign(t-\seconds)\big)\big(P({t})q_{n}({t})-C_{n}({t})\big)}\\
    -\alpha_{n}e^{(t-\first)}\big(\sign{(t-\first)}-{{\PI\sign(t-\seconds)}\big)\Big)}.
\label{eq:gamma2}
\end{aligned}
\end{equation}
We ensure that Equation \eqref{eq:gamma2} holds if we have

\begin{equation}
\begin{aligned}
\gamma_{n}&\geq{\max_{t\in\mathcal{T}_{n}}\sum\limits_{s\in\mathcal{S}_{n}}{\Big\{}\Big(-\frac{\PI}{2}\big(1-\sign(t-\seconds)\big)\big(P({t})q_{n}({t})-C_{n}({t})}\\
    &+\alpha_{n}e^{(t-\first)}\big(\sign{(t-\first)}-{{\PI\sign(t-\seconds)}\big)\Big)}\Big\},\quad\NN,
\label{eq:gamma3}
\end{aligned}
\end{equation}
since $\gamma$ should be positive, so \eqref{eq:gamma1} must be satisfied.
\end{proof}

\begin{theorem}
\label{eq:theory}
Algorithm \ref{alg:iterative algorithm} converges, in the sense that
\begin{equation}
\lim_{i\to\infty}y_{n}^{i}(t)=x_{n}^{i}(t),\quad t\in\mathcal{T}_{n},\quad \NN.
\end{equation}
\end{theorem}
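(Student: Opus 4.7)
The plan is to show that the algorithm's dynamics force a monotone ``exclusion'' of rejected maintenance time slots, and that Assumptions \ref{eq:assumption1}--\ref{eq:assumption2} guarantee a feasible schedule exists within the surviving slots. My first observation is that constraint $\text{A}_{3}$ implies $y_{n}^{i}(t)\leq x_{n}^{i}(t)$ for every iteration, so the only source of discrepancy is the event $x_{n}^{i}(t)=1$ together with $y_{n}^{i}(t)=0$, i.e.\ the TSO rejecting a proposed maintenance slot. Consequently, convergence amounts to showing that after finitely many iterations no such rejection occurs at any $(n,t)$.

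Next I would exploit the sign structure of the incentive \eqref{eq:incentive}. Since $y_{n}^{o}(t)-x_{n}^{o}(t)\leq 0$ for all $o$, the inner sum is non-positive and thus $\sign(\cdot)\in\{-1,0\}$; the value $-1$ is reached exactly after the first rejection of $t$ for agent $n$, and stays at $-1$ thereafter. Lemma \ref{eq:lemma} then guarantees that $\gamma_{n}$ dominates any reward the agent could extract from choosing $x_{n}^{i+1}(t)=1$, hence a rational agent permanently abandons time slot $t$: by induction on iterations, once $(n,t)$ is rejected, we have $x_{n}^{j}(t)=0$ for all $j>i$. Thus the set $A_{n}^{i}:=\{t\in\mathcal{T}_{n}: x_{n}^{i}(t)\text{ can still be chosen}\}$ is weakly decreasing in $i$ and bounded below, so it stabilizes after finitely many iterations $i^{\star}$.

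I would then argue that at the stabilized stage the TSO must accept. By Assumption \ref{eq:assumption1}, the window $[\first,\seconds]$ is wide enough to sequentially host $N$ non-overlapping maintenance actions of length $r_{n}$; by Assumption \ref{eq:assumption2}, for any schedule in which the active maintenance windows of distinct agents do not overlap, the $\text{A}_{1}$/$\text{A}_{7}$ power balance is satisfied by the units currently operating. Together with constraint $\text{C}_{3}$, which prevents an agent from choosing $\mathbf{x_{n}}\equiv 0$, this implies that the exclusion process cannot eliminate all feasible choices; it can only drive each agent toward a slot that is compatible with the TSO's constraints. At that fixed point every surviving $x_{n}^{i}(t)=1$ is feasible for \eqref{eq:central system1}, so the TSO sets $y_{n}^{i}(t)=x_{n}^{i}(t)$, giving the claimed limit.

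The main obstacle I expect is the compatibility step: ensuring that the agent's greedy reaction to the penalty actually navigates toward a jointly feasible, non-overlapping schedule rather than oscillating between infeasible configurations. The monotonicity of $A_{n}^{i}$ resolves the oscillation concern (no slot can return once excluded), but I would have to argue carefully that $A_{n}^{i^{\star}}$ still contains a slot admissible under $\text{C}'_{1}$--$\text{C}'_{3}$ for every $n$ simultaneously; this is where Assumption \ref{eq:assumption1} is essential, because the worst case $N\max_{n}r_{n}\leq\seconds-\first$ leaves at least one admissible slot per agent even after all conflicting slots have been penalized out. A minor technical wrinkle is the central system's preference term $\PI/(\seconds-t+\epsilon)$ in \eqref{eq:central system1}: I would note that it only refines the choice among feasible bids and does not affect which bids can ultimately be accepted, so it does not impede convergence.
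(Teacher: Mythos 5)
Your proposal is correct and follows essentially the same route as the paper's own proof: Lemma \ref{eq:lemma} forces rejected agents to abandon their proposed slots, and Assumptions \ref{eq:assumption1}--\ref{eq:assumption2} guarantee that a jointly feasible schedule survives so the TSO eventually accepts everyone. Your version is substantially more detailed than the paper's two-sentence argument --- in particular, the observation that the sign term in the incentive locks in at $-1$ after the first rejection, making the set of available slots monotonically shrinking and hence the rejection process finitely terminating, is a step the paper leaves entirely implicit, and you correctly flag (as the paper does not) that the remaining gap is showing the exclusion process cannot strand some agent without any slot that is simultaneously admissible for all agents.
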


\begin{proof}
Lemma $1$ ensures that agents whose decisions cannot be accepted by the central system will change their decisions at the next iteration. In addition, Assumptions $1$ and $2$ ensure that there is sufficient time between ${\seconds}$ and $\first$, $s\in\mathcal{S}_{n}$, $\NN$, such that all the agents' decisions can be accepted by the central system at the end of the negotiation process.
\end{proof}

\subsection{Budget balance}

\begin{definition}(Budget balance)
The negotiation mechanism is budget balanced if the cumulative amount of incentives/penalties that the central system has to pay or receive from the agents in every iteration of the algorithm would be equal to zero \cite{Herriges_1992}. This condition is expressed by:
\begin{equation}
 \SUMN\sum\limits_{t\in\mathcal{T}_{n}}{I}^{i}_{n}(t)=0.
\label{eq:BUDEGT}
\end{equation}
\end{definition}

\begin{definition}(Weak budget balance)
The negotiation mechanism is weak budget balanced if the cumulative amount of penalties that the central system gets from the agents is larger than the cumulative amount of incentives that it pays to the agents. In other words, the revenue of the central system is positive \cite{Krishna_2009}. It can be expressed as follows:
\begin{equation}
 \SUMN\sum\limits_{t\in\mathcal{T}_{n}}{I}^{i}_{n}(t)\leq{0}.
\label{eq:weak BUDEGT}
\end{equation}
\end{definition}

In the proposed algorithm, at each iteration $\YT\leq\XT$ and, therefore, ${I}^{i}_{n}(t)\leq{0}$, $\NN$, ${t\in\mathcal{T}_{n}}$. Hence, we can deduce that the TSO does not need to pay the money to the agents to change their decisions and if the agents want to stop at any iteration before the convergence occurs they have to pay the money to TSO, hence our mechanism has a weak budget balance property. 

Moreover, if $\XT$ cannot be accepted by the central system at iteration $i$, then, agent $n$ receives the incentive signal $I_{n}^{i+1}(t)=-\gamma_{n}x_{n}^{i+1}(t)$ at iteration $i+1$, $\NN$, ${t\in\mathcal{T}_{n}}$. Using Lemma $1$, we make sure that $x_{n}^{i+1}(t)=0$, therefore, $I_{n}^{i+1}(t)=0$, $\NN$, ${t\in\mathcal{T}_{n}}$. Furthermore, since every agent's decisions are accepted at the convergence point, denoted by $i^*$, using Theorem $1$, we can conclude that

\begin{equation}
 \SUMN\sum\limits_{{t\in\mathcal{T}_{n}}}{I}^{i^*}_{n}(t)={0}.
\label{eq:weak BUDEGT1}
\end{equation}

Hence, our mechanism is budget balance at convergence. 
\section{Case study}
\label{sec:case}

The developed algorithm is applied to the IEEE 39 bus New England system \cite{Bills_1970}. The system consists of 39 buses, 29 lines, 46 branches of which 12 transformers, and 10 generating units with the total generating capacity of 7367 MW. The yearly load curve  with peak demand of 6254 MW and minimum demand of 3026 MW is used (Figure \ref{fig:demand}).

\begin{figure}[H]
\centering
\includegraphics[scale=0.8]{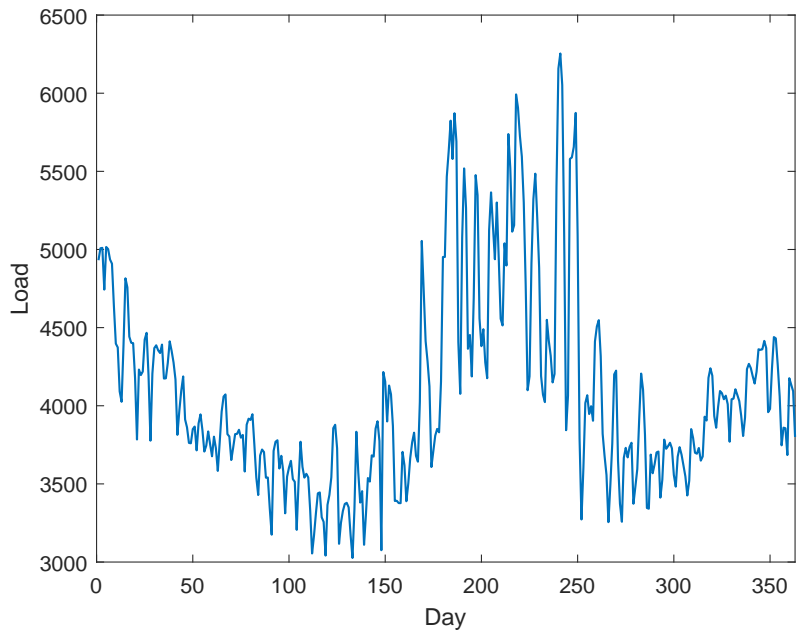}
\caption{A yearly demand curve for the IEEE 39 bus system.}
  \label{fig:demand}
\end{figure}

\section{Results}
\label{sec:simulation}

The results of the maintenance algorithm are shown for a period of $365$ days which consists of different maintenance scheduling time horizon with $50$ scenario paths $(S_{n}=50)$, $\NN$, for the remaining useful life which is a random variable. In particular, the uncertainty of the RUL prediction for agent $n$ is assumed to obey a normal distribution with mean $\tau_{n}$ and standard deviation $\sigma_{n}$. The values of the first threshold $t_{1,n}$ and the parameters of the normal distribution for second threshold $t_{2,n}$ for different maintenance scheduling time horizons for all the agents are displayed in Table \ref{tab:thereshold}. 

\begin{table}
  \caption{The first threshold and second threshold distribution parameters}
  \centering
 \begin{tabular}{|c  c c c|} 
 \hline
 Agent &  $t_{1,n}$ & $\tau_{n}$ & $\sigma_{n}$ \\ [0.5ex]
 \hline
 1 & [55, 172, 290] & [115, 251, 380] & [6.67, 6.56, 5.45] \\ 
 \hline
 2 &  [31, 132, 226, 282] & [91, 185, 258,360] & [5.06, 6.92, 5.77, 5.15] \\
 \hline
 3 & [48, 152, 266, 363] & [109, 201, 306, 396] & [6.32, 6.95, 5.89, 6.23]\\
 \hline
 4 & [22, 125, 200, 272] & [82, 165, 241, 320, 400] & [5.74, 5.44, 6.27, 6.39, 5.25]\\
 \hline
 5 & [24, 115, 186, 273] & [84, 161, 239, 324, 416]& [6.28, 5.20, 5.03, 5.32, 6.05] \\
 \hline
 6 & [54, 173, 283] & [114, 236, 371] &[5.14, 6.70, 6.39]\\
 \hline
 7 & [37, 159, 217, 301] & [97, 178, 272, 365]& [5.48, 6.09, 6.32, 6.88] \\
 \hline
 8 & [42, 125, 244, 363] & [102, 212, 330, 435]&[5.59, 5.63, 6.33, 5.23]\\
 \hline
 9 & [49, 157, 264, 363] & [109, 224, 335, 450]&[5.88, 6.53, 6.81, 6.76]\\
 \hline
 10 & [58, 173, 304, 417] & [118, 247, 349, 457]& [5.61, 6.59, 5.81, 5.77]\\
 \hline
\end{tabular}
  \label{tab:thereshold}
\end{table}

\begin{remark}
The selected values of $t_{1,n}$ and $\tau_{n}$, $\NN$, shown in Table \ref{tab:thereshold} exemplify relatively large failure rates for each agent. This selection allows demonstrating the negotiation process in the simulation results for the size of the test system. Nonetheless, the proposed coordination algorithm can account for planning decision horizons of several years.
\end{remark}

The convergence of the decision making process of the agents and central system at each iteration of the algorithm are shown in Figure \ref{fig:maintenance decision}.

\begin{figure}[H]
\centering
{\onecolumn\includegraphics[scale=0.6]{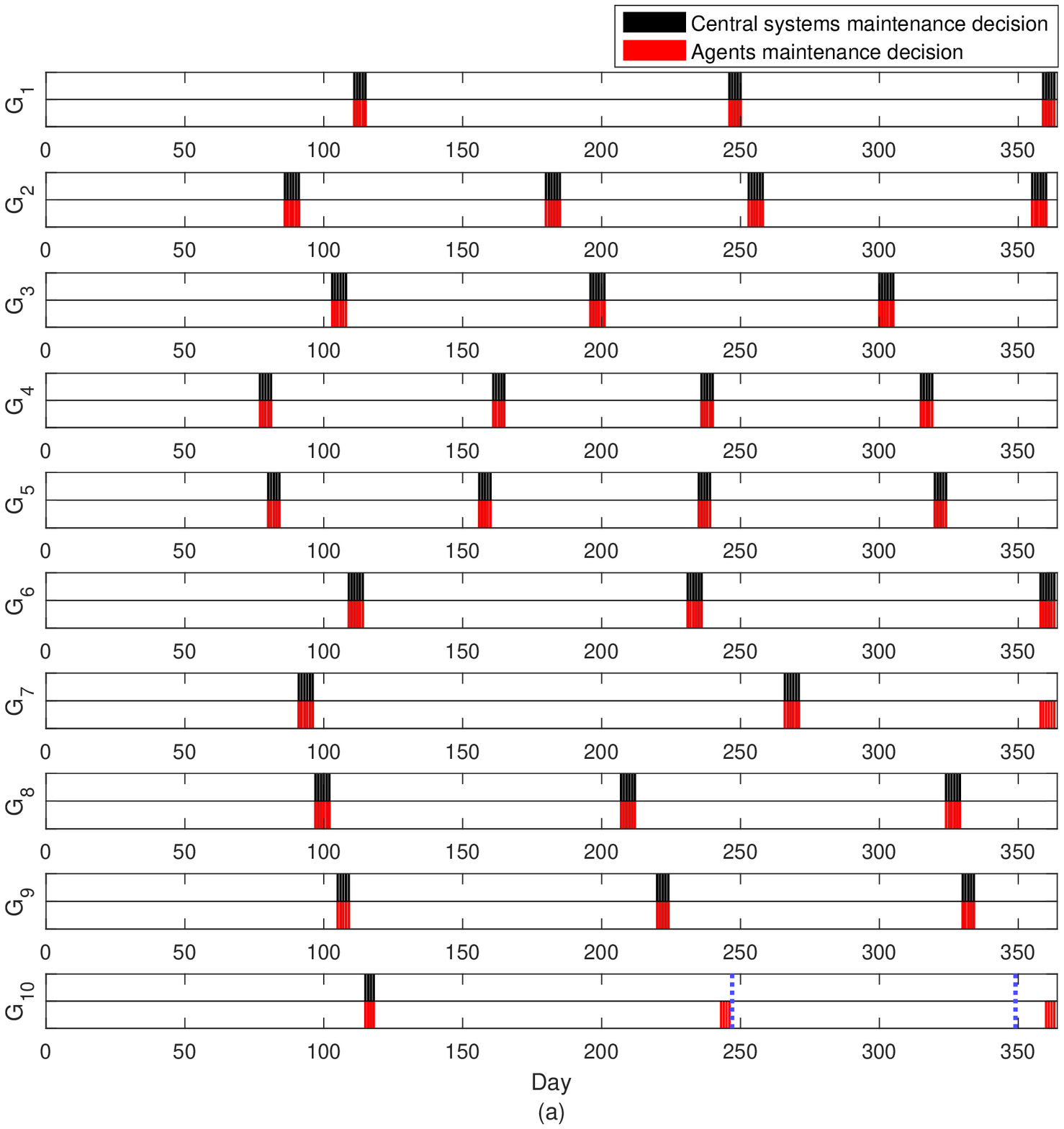}}\\
{\onecolumn\includegraphics[scale=0.6]{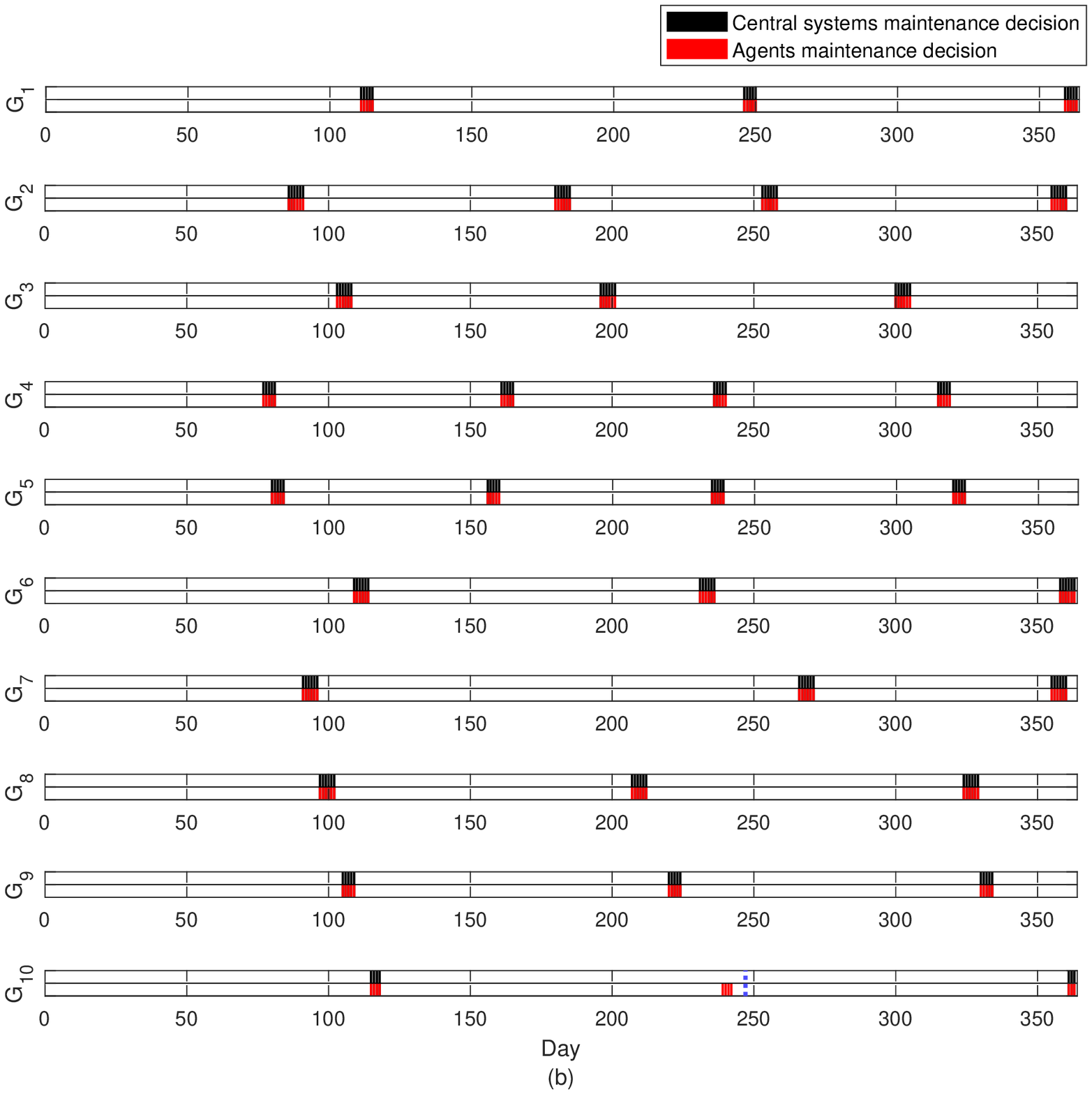}}
 \end{figure}

\begin{figure}[H]
\centering
{\onecolumn\includegraphics[scale=0.6]{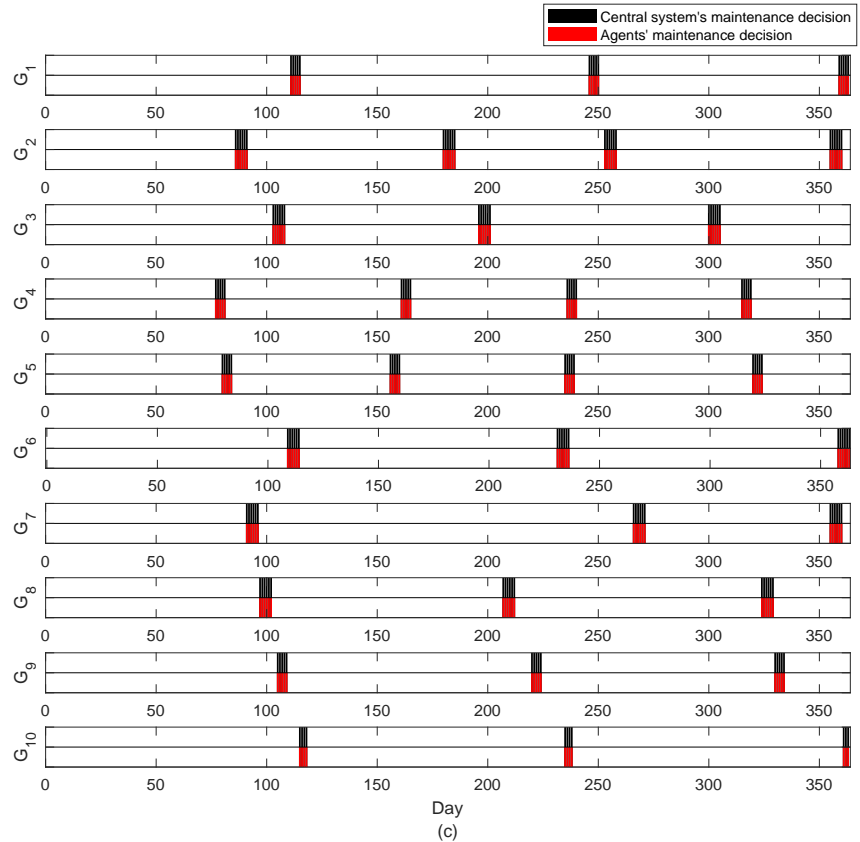}}
  \caption{Maintenance decisions of each of the 10 agents and the central system (a) at the first iteration (b) at the second iteration (c) at the third (final) iteration when the algorithm converges. The dashed vertical lines for agent 10 in panel (a) and (b) indicate the mean value $\tau_{10}$ of the failure time distribution for agent 10.}
  \label{fig:maintenance decision}
  \end{figure}
Figure \ref{fig:maintenance decision} shows that the decisions of agents $7$ and $10$ are not acceptable for the central system in the first iteration of the algorithm (Figure \ref{fig:maintenance decision}(a)). In this case, the two generating units would fail if they do not perform maintenance in each maintenance scheduling time horizon. Therefore, we introduce an incentive signal which encourages the agents to change their decisions. Indeed, in the next iteration, agents $7$ and $10$ change their decisions, such that the central system accepts the decision of agent $7$ (Figure \ref{fig:maintenance decision}(b)) rejects the decision of agent $10$. The negotiation is, therefore, repeated until the convergence is reached. Finally, all the agent's decisions are accepted by the central system in the last iteration as shown in Figure \ref{fig:maintenance decision}(c). Therefore, we can conclude that the algorithm converges and after all decisions are accepted the agents' decisions will not change. Figure \ref{fig:capacity} shows the maximum available capacity during the iterations of the negotiation process.

\begin{figure}[H]
\centering
\includegraphics[scale=0.5]{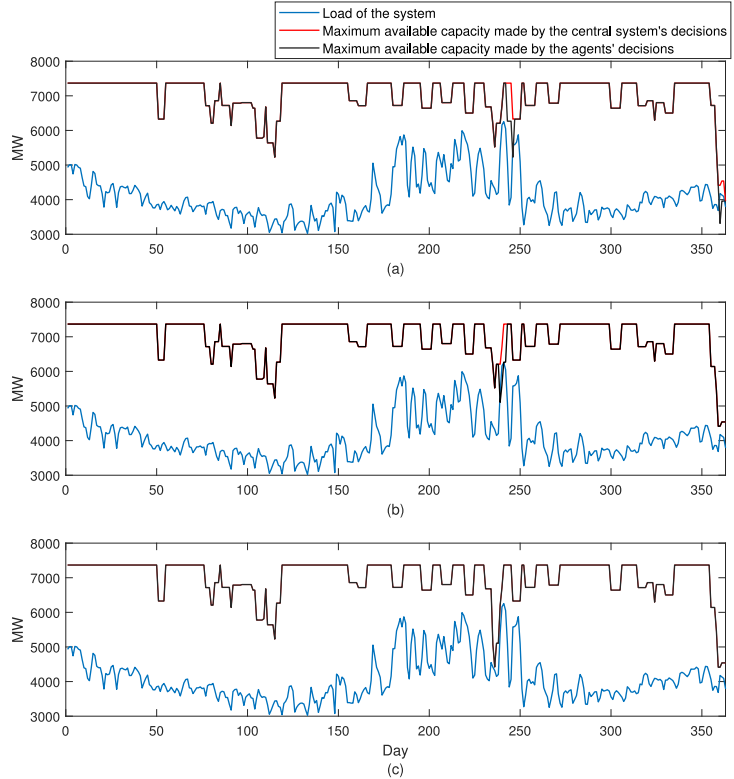}
\caption{Fulfilled capacity during the negotiation process (a) at the first iteration (b) at the second iteration (c) at the third iteration.}
  \label{fig:capacity}
\end{figure}
Figure \ref{fig:capacity} shows that during the first and second iterations of the algorithm the available capacity that agents can fulfill is less than the load of the system. Hence, the central system cannot accept the maintenance decisions of all the agents. At the third iteration, the available capacity made by the agents' decisions is higher than the power demand. Therefore, all the agents' decisions are accepted.

We compare the incentives or penalties of agents whose decisions could not be accepted in at least one iteration (agents $7$ and $10$) in Table \ref{tab:incentive}.

\begin{table}[H]
  \caption{Incentive signals for agents [\$]}
\centering
 \begin{tabular}{|c c c c|}
 \hline
 Agent's number & Iteration $1$ & Iteration $2$ & Iteration $3$ \\ [0.5ex] 
 \hline
 7 & -200,000 & 0 & 0 \\
 \hline
 10 & -800,000 & -400,000 & 0  \\ 
 \hline
\end{tabular}
  \label{tab:incentive}
\end{table}
Table \ref{tab:incentive} shows that agents $7$ and $10$ receive the penalty function in iteration $1$ because their maintenance decisions could not be accepted. Then, in the second iteration, agent $7$ modifies its decisions and does not need to pay the penalty to the central system and its decisions can be accepted. However, in the second iteration agent $10$ receives a penalty function since its decision is rejected by the central system. At iteration $3$, the algorithm converges and agents receive no penalties anymore. Hence, based on results reported in Table \ref{tab:incentive}, the revenue of the central system during the negotiation is positive. Therefore, the mechanism is weak budget balanced. At iteration $3$ (convergence point), the summation of the incentive signals is zero and the mechanism is budget balanced.

We compare the change of achieved rewards for agents $7$ and $10$ in each iteration of the algorithm during the negotiation process in Table \ref{tab:reward}.

\begin{table}[H]
  \caption{Expected rewards of the agents by considering the incentive function during negotiation [\$]}
  \centering
 \begin{tabular}{|c c c c|} 
 \hline
 Agent's number & Iteration $1$ & Iteration $2$ & Iteration $3$ \\ [0.5ex] 
 \hline
 7 & 503,800 & 704,100 & 704,100   \\ 
 \hline
 10 & 662,900 & 1,054,000 & 1,455,800 \\
  \hline
\end{tabular}
  \label{tab:reward}
\end{table}
 Table \ref{tab:reward} shows that the rewards of the agents increase during the negotiation since in each iteration after they receive the incentive (penalty) signal, they change their decisions such that the penalty signal decreases in the next iteration. Moreover, the rewards of all these agents are positive, even when they get a negative incentive signal. We can infer from these positive rewards that the agents will participate in the negotiation algorithm voluntarily, assuming that the agents are rational, since they get positive rewards instead of no rewards when they do not participate in the negotiation. Hence, our proposed incentive signal scheme has the individual rationality feature. 

\subsection{Comparison of the proposed algorithm to the baseline decisions}
The results of the proposed algorithm are compared against the baseline solutions where agents decide about their maintenance in each scheduling time at $t_{1,n}$, $\NN$. This corresponds to the condition-based maintenance scenario where failures are detected but no remaining useful lifetime prediction can be provided, $t_{2,n}$, $\NN$ are, therefore, not known. Moreover, we compare the results of the proposed algorithm  with the maintenance decision at $\tau_{n}$, $\NN$ where the units have high failure probability. This scenario corresponds to the scenario of corrective maintenance. While both scenarios (condition-based and corrective) are displayed in one figure, it is important to note that the maintenance is performed only once either at $t_{1,n}$, or at $\tau_{n}$, $\NN$.

The decisions of agents at $t_{1,n}$ and $\tau_{n}$, $\NN$, are shown in Figure \ref{fig:baselinet1}.

\begin{figure}[H]
\centering
\includegraphics[scale=0.6]{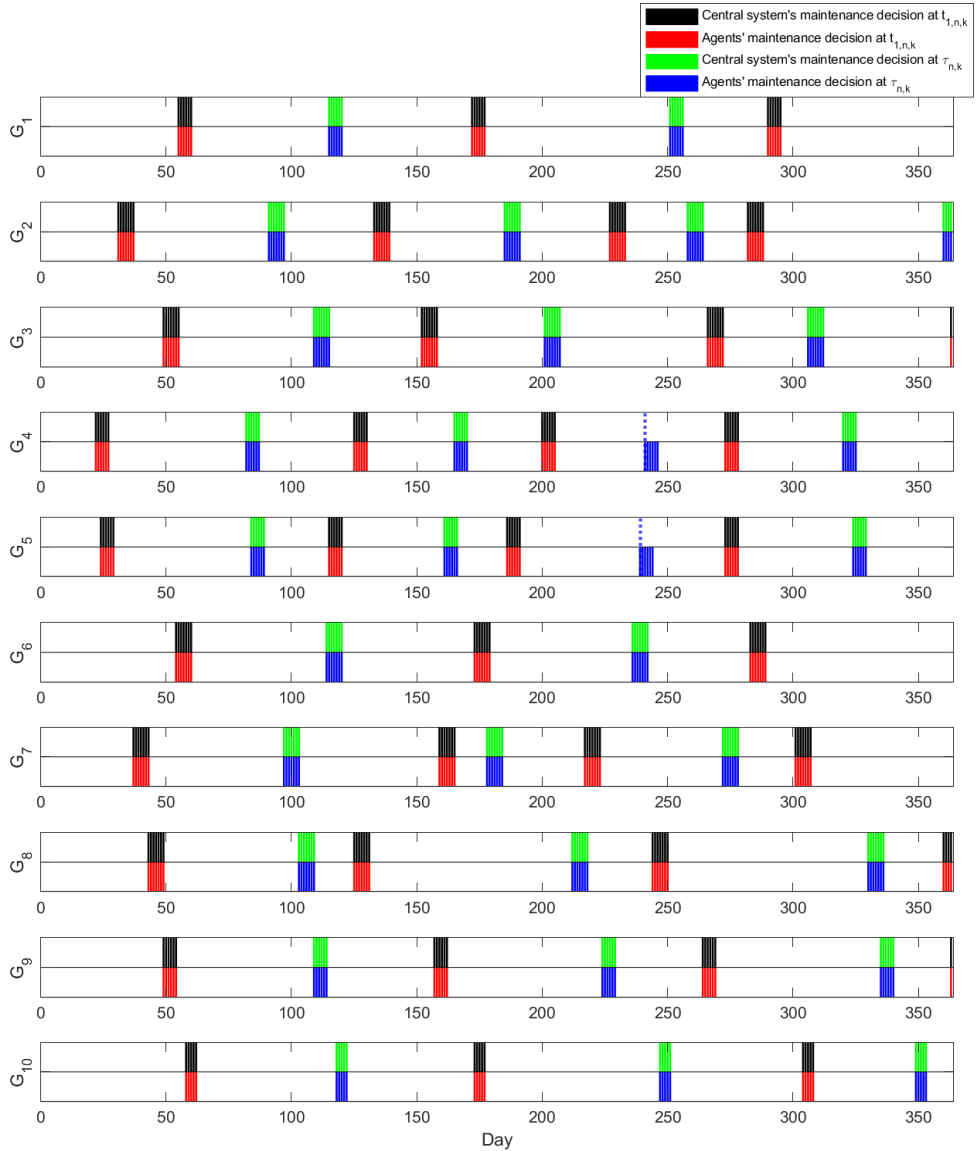}
\caption{Maintenance decisions at $t_{1,n}$ (condition-based maintenance scenario) and $\tau_{n}$, $\NN$ (corrective maintenance scenario), (the decisions of agents 4 and 5 at $\tau_{n}$ would not fulfill the load. The
dashed vertical lines for agents 4 and 5 indicate the mean values $\tau_{4}$ and $\tau_{5}$ of the failure time distributions for agents 4 and 5.)}
  \label{fig:baselinet1}
\end{figure}

We compare the rewards of the agents for the three analysed cases: decisions at $t_{1,n}$ (condition-based), $\tau_{n}$, $\NN$ (corrective), and based on the proposed negotiation algorithm (Figure \ref{fig:base line cost}). 
\begin{figure}[H]
\centering
\includegraphics[scale=0.6]{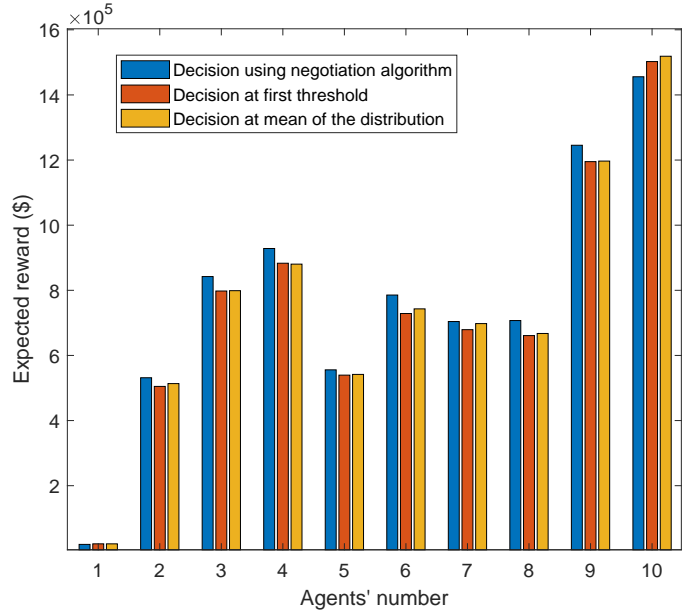}
\caption{Comparison of the rewards in the three cases: Negotiation algorithm, condition based decision $t_{1,n}$, decision at $\tau_{n}$, $\NN$.}
  \label{fig:base line cost}
\end{figure}
Figure \ref{fig:base line cost} shows that the agents' rewards with the proposed negotiation algorithm are larger than or equal to the rewards in the other two cases, (assuming that their decisions are accepted by the central system). This outcome is expected because the agents' decisions in the negotiation algorithm are based on the optimization approach which maximizes their rewards (Equation \eqref{eq:agent's decision stochastic}). However, when the agents decisions cannot be accepted by the central system and the agents receive an incentive signal, their decisions do not maximize the reward functions (Equation \eqref{eq:agent's decision stochastic}). Therefore, the reward could be smaller than the rewards that they would obtain by the decision at $t_{1,n}$ or $\tau_{n}$, $\NN$. Agent $10$ is the agent who gets the incentive signal and, hence, its rewards is smaller than the reward for the decisions at $t_{1,n}$ and $\tau_{n}$, $\NN$. 

It is important to point out that the baseline decisions are not an efficient method to obtain the maintenance scheduling since some of the agents' decisions cannot always be accepted by the central system as shown in Figure \ref{fig:baselinet1}, i.e. the maintenance decisions of agents' $4$ and $5$ cannot be accepted by the central system. In this case, they would fail and would not get any reward.  Please note that the penalties that would be imposed for the case that the generating units are failing and the demand would not be able to be fulfilled are not explicitly considered in this research. This would be the case for units 4 and 5 in Figure \ref{fig:base line cost}: the central system would not accept those maintenance actions since the demand would not be able to be fulfilled. However, since it is the corrective maintenance scenario, the generating units do not have a choice when to perform maintenance. The maintenance needs to be performed after the failure occurred. 

\subsection{Impact of the number of scenarios $S_{n}$ and of the standard deviation $\sigma_{n}$ of RUL}
In this subsection, we study the effect of the number of scenarios $S_{n}$ and of the standard deviation, $\NN$ of the uncertainties associated with the RUL predictions, on the total expected reward of all agents. For the standard deviation $\sigma_{n}$, $\NN$, of the normal distribution of the uncertainty for the RUL predictions, we consider three ranges, i.e. $[1; 3]$, $[5; 7]$ and $[10; 12]$, which represent low, medium and high values. The resulting total expected rewards are $8.83e+06$ $\$$,  $8.80e+06$ $\$$ and $8.74e+06$ $\$$, respectively, when we have $50$ scenarios. Hence, the expected reward decreases about $2\%$ value range as the standard deviation increases. In other words, the increasing uncertainty
of the predictions, i.e. the increasing standard deviation, decreases the expected reward.

Furthermore to analyze the impact of the number of scenarios $S_{n}$, we consider the medium range for the standard deviation and $10$, $50$ and $100$ scenarios. The resulting total expected rewards are $8.74e+06$ $\$$, $8.80e+06$ $\$$ and $8.80e+06$ $\$$, respectively. When the number of scenarios increases, the expected reward increases about $1\%$ value range and saturates at approximately 50 scenarios, after which the agents’ rewards do not increase significantly. Hence, increasing the number of scenarios of RUL prediction improves the uncertainty description and increases the expected rewards. This effect is consistent with the effect of reducing $\sigma_{n}$ on the expected reward. Indeed, a minimum number of scenarios is required to achieve a sufficiently accurate representation of the RUL prediction distribution. However, when the number of scenarios increases, the computational time also increases. As an example, using 10 and 100 scenarios entails a computation time for each iteration of $330$ s and $3310$ s, respectively. 

\section{Conclusions}
\label{sec:conclusion}
We propose a novel bi-level negotiation framework to solve the generation maintenance scheduling problem in the context of predictive maintenance where remaining useful lifetime (RUL) can be predicted with an estimated uncertainty. Within this framework, we propose a model for the agents' objective function which allows the maintenance actions to be obtained based on the expected reward and fault progression cost. One of the main contributions of the paper is the proposed incentive mechanism for power generating units. The proposed incentive mechanism ensures that the power generation will meet the demand while being budget balanced at the convergence point. Furthermore, the proposed incentive mechanism has the property of being individually rational for each of the agents. These two properties make the proposed algorithm particularly attractive for real applications: a) ensuring that the central coordinating system is not imposed with any additional overhead costs; and b) ensuring that it is rational for each of the power generating units to participate in the negotiation process. 

Our simulation results demonstrate that the proposed mechanism results in better performance in comparison to the base-line decisions such as the purely condition-based maintenance or the corrective maintenance.  

It would be also interesting to test the proposed algorithm in larger and more complex networks with more agents and more coordination requirements. 

As future research work we plan to extend the proposed algorithm to electrical markets with high penetration of renewable energy sources, where the amount of power generated by the generators is uncertain. 

\section*{Acknowledgment}
The contribution of Olga Fink was funded by the Swiss National Science Foundation (SNSF) Grant no. PP00P2\_176878. This project is carried out within the frame of the Swiss Centre for Competence in Energy Research on the Future Swiss Electrical Infrastructure (SCCER-FURIES) - Digitalisation programme with the financial support of the Swiss Innovation Agency (Innosuisse-SCCER program).

\bibliographystyle{elsarticle-num}
\bibliography{bib_items}

\begin{thebibliography}{10}
\expandafter\ifx\csname url\endcsname\relax
  \def\url#1{\texttt{#1}}\fi
\expandafter\ifx\csname urlprefix\endcsname\relax\def\urlprefix{URL }\fi
\expandafter\ifx\csname href\endcsname\relax
  \def\href#1#2{#2} \def\path#1{#1}\fi

\bibitem{Kralj_1988}
B.~L. Kralj, R.~Petrovi, Optimal preventive maintenance scheduling of thermal
  generating units in power systems—a survey of problem formulations and
  solution methods, European Journal of Operational Research 35~(1) (1988)
  1--15.

\bibitem{Bangalore_2016}
P.~Bangalore, B.~Tjernberg, Condition monitoring and asset management in the
  smart grid, Smart Grid Handbook (2016) 1--13.

\bibitem{Dieulle_2003}
L.~Dieulle, C.~B{\'e}renguer, A.~Grall, M.~Roussignol, Sequential
  condition-based maintenance scheduling for a deteriorating system, European
  Journal of operational research 150~(2) (2003) 451--461.

\bibitem{Yang_2008}
Z.~Yang, D.~Djurdjanovic, J.~Ni, Maintenance scheduling in manufacturing
  systems based on predicted machine degradation, Journal of intelligent
  manufacturing 19~(1) (2008) 87--98.

\bibitem{Shahidehpour_2012}
M.~Shahidehpour, M.~Marwali, Maintenance scheduling in restructured power
  systems, Springer Science and Business Media, 2012.

\bibitem{Moslehi_2010}
K.~Moslehi, R.~Kumar, A reliability perspective of the smart grid, IEEE
  transactions on smart grid 1~(1) (2010) 57--64.

\bibitem{Kosenok_2008}
G.~Kosenok, S.~Severinov, Individually rational, budget-balanced mechanisms and
  allocation of surplus, Journal of Economic Theory 140~(1) (2008) 126--161.

\bibitem{Volkanovski_2008}
A.~Volkanovski, B.~Mavko, T.~Boševski, A.~Čauševski, M.~Čepin, Genetic
  algorithm optimisation of the maintenance scheduling of generating units in a
  power system, Reliability Engineering and System Safety 93~(6) (2008)
  779--789.

\bibitem{Samuel_2015}
G.~Samuel, C.~Rajan, Hybrid: particle swarm optimization--genetic algorithm and
  particle swarm optimization--shuffled frog leaping algorithm for long-term
  generator maintenance scheduling, International Journal of Electrical Power
  and Energy Systems 65 (2015) 432--442.

\bibitem{Dahal_2007}
K.~Dahal, N.~Chakpitak, Generator maintenance scheduling in power systems using
  meta-heuristic-based hybrid approaches, Electric Power Systems Research
  77~(7) (2007) 771--779.

\bibitem{Xiao_2016}
L.~Xiao, S.~Song, X.~Chen, D.~Coit, Joint optimization of production scheduling
  and machine group preventive maintenance, Reliability Engineering and System
  Safety 146 (2016) 68--78.

\bibitem{Sadeghian_2019}
O.~Sadeghian, A.~Oshnoei, S.~Nikkhah, B.~Mohammadi-Ivatloo, Multi-objective
  optimisation of generation maintenance scheduling in restructured power
  systems based on global criterion method, IET Smart Grid 2~(2) (2019)
  203--213.

\bibitem{Ghazvini_2012}
M.~Ghazvini, H.~Morais, Z.~Vale, Coordination between mid-term maintenance
  outage decisions and short-term security-constrained scheduling in smart
  distribution systems, Applied energy 96 (2012) 281--291.

\bibitem{Abiri_2012}
A.~Abiri-Jahromi, M.~Fotuhi-Firuzabad, M.~Parvania, Optimized midterm
  preventive maintenance outage scheduling of thermal generating units, IEEE
  Transactions on Power Systems 27~(3) (2012) 1354--1365.

\bibitem{Ren_2019}
Y.~Ren, D.~Fan, Q.~Feng, Z.~Wang, B.~Sun, D.~Yang, Agent-based restoration
  approach for reliability with load balancing on smart grids, Applied Energy
  249 (2019) 46--57.

\bibitem{Li_2005}
T.~Li, M.~Shahidehpour, Strategic bidding of transmission-constrained gencos
  with incomplete information, IEEE Transactions on power Systems 20~(1) (2005)
  437--447.

\bibitem{Kuznetsova_2014}
E.~Kuznetsova, Y.~Li, C.~Ruiz, E.~Zio, An integrated framework of agent-based
  modelling and robust optimization for microgrid energy management, Applied
  Energy 129 (2014) 70--88.

\bibitem{Ghazvini_2013}
M.~Ghazvini, B.~Canizes, Z.~Vale, H.~Morais, Stochastic short-term maintenance
  scheduling of gencos in an oligopolistic electricity market, Applied energy
  101 (2013) 667--677.

\bibitem{Conejo_2005}
A.~Conejo, B.~Garc{\'\i}, S.~D{\'\i}az-Salazar, Generation maintenance
  scheduling in restructured power systems, IEEE Transactions on Power Systems
  20~(2) (2005) 984--992.

\bibitem{Min_2013}
C.~Min, M.~Kim, J.~Park, Y.~Yoon, Game-theory-based generation maintenance
  scheduling in electricity markets, Energy 55 (2013) 310--318.

\bibitem{Feng_2009}
C.~Feng, X.~Wang, A competitive mechanism of unit maintenance scheduling in a
  deregulated environment, IEEE transactions on power systems 25~(1) (2009)
  351--359.

\bibitem{Wang_2016}
Y.~Wang, D.~Kirschen, H.~Zhong, Q.~Xia, C.~Kang, Coordination of generation
  maintenance scheduling in electricity markets, IEEE Transactions on Power
  Systems 31~(6) (2016) 4565--4574.

\bibitem{Yildirim_2016}
M.~Yildirim, X.~S. Andy, N.~Gebraeel, Sensor-driven condition-based generator
  maintenance scheduling—part ii: incorporating operations, IEEE Transactions
  on Power Systems 31~(6) (2016) 4263--4271.

\bibitem{Yildirim1_2016}
M.~Yildirim, X.~S. Andy, N.~Gebraeel, Sensor-driven condition-based generator
  maintenance scheduling—part i: Maintenance problem, IEEE Transactions on
  Power Systems 31~(6) (2016) 4253--4262.

\bibitem{lee2014prognostics}
J.~Lee, F.~Wu, W.~Zhao, M.~Ghaffari, L.~Liao, D.~Siegel, Prognostics and health
  management design for rotary machinery systems—reviews, methodology and
  applications, Mechanical systems and signal processing 42~(1-2) (2014)
  314--334.

\bibitem{fink2020data}
O.~Fink, Data-driven intelligent predictive maintenance of industrial assets,
  in: Women in Industrial and Systems Engineering, Springer, 2020, pp.
  589--605.

\bibitem{Si_2012}
X.~S. Si, W.~Wang, C.~H. Hu, D.~H. Zhou, M.~G. Pecht, Remaining useful life
  estimation based on a nonlinear diffusion degradation process, IEEE
  Transactions on Reliability 61~(1) (2012) 50--67.

\bibitem{sankararaman2013remaining}
S.~Sankararaman, K.~Goebel, Why is the remaining useful life prediction
  uncertain, in: Annual conference of the prognostics and health management
  society, Vol. 2013, 2013.

\bibitem{fink2020potential}
O.~Fink, Q.~Wang, M.~Svens{\'e}n, P.~Dersin, W.-J. Lee, M.~Ducoffe, Potential,
  challenges and future directions for deep learning in prognostics and health
  management applications, Engineering Applications of Artificial Intelligence
  92 (2020) 103678.

\bibitem{sankararaman2013analytical}
S.~Sankararaman, M.~Daigle, A.~Saxena, K.~Goebel, Analytical algorithms to
  quantify the uncertainty in remaining useful life prediction, in: 2013 IEEE
  Aerospace Conference, IEEE, 2013, pp. 1--11.

\bibitem{saxena2010evaluating}
A.~Saxena, J.~Celaya, B.~Saha, S.~Saha, K.~Goebel, Evaluating prognostics
  performance for algorithms incorporating uncertainty estimates, in: 2010 IEEE
  Aerospace Conference, IEEE, 2010, pp. 1--11.

\bibitem{Fortuny_1981}
A.~Fortuny, B.~McCarl, A representation and economic interpretation of a
  two-level programming problem, Journal of the operational Research Society
  32~(9) (1981) 783--792.

\bibitem{Herriges_1992}
J.~Herriges, R.~Govindasamy, J.Shogren, Budget balancing incentive mechanisms
  (1992).

\bibitem{Krishna_2009}
V.~Krishna, Auction theory, Academic press, 2009.

\bibitem{Bills_1970}
G.~Bills, On-line stability analysis study,, Tech. rep., North American
  Rockwell Information Systems Co., Anaheim, CA (USA) (1970).

\end{thebibliography}

\end{document}